\theoremstyle{plain}
\newtheorem{thm}{Theorem}[section]
\newtheorem{prop}[thm]{Proposition}
\theoremstyle{definition}
\theoremstyle{remark}
\title[Shapley Value and Counterfactual Simulations]{Group Shapley Value and Counterfactual Simulations in a Structural Model}\thanks{We would like to participants at CeMMAP-Turing Economic Data Science Workshop in May 2024,
Optimization-Conscious Econometrics Conference at the University of Chicago in June 2024, and KBER Summer Institute Conference at Seoul National University in July 2024 for helpful comments and Joonhyuk Lee for excellent research assistance.}
\author[Kwon]{Yongchan Kwon}
\address{Department of Statistics \\
Columbia University \\
New York, NY 10027}
\email{yk3012@columbia.edu}
\author[Lee]{Sokbae Lee}
\address{Department of Economics \\
Columbia University \\
New York, NY 10027}
\email{sl3841@columbia.edu}
\author[Pouliot]{Guillaume A. Pouliot}
\address{Harris School Public Policy \\
University of Chicago \\
Chicago, IL 60637}
\email{guillaumepouliot@uchicago.edu}
\date{October 9, 2024}                                           
\begin{document}

\begin{abstract}
We propose a variant of the Shapley value, the \emph{group Shapley value}, to interpret counterfactual simulations in structural economic models by quantifying the importance of different components. Our framework compares two sets of parameters, partitioned into multiple groups, and applying group Shapley value decomposition yields unique additive contributions to the changes between these sets. The relative contributions sum to one, enabling us to generate an \emph{importance table} that is as easily interpretable as a regression table. The group Shapley value can be characterized as the solution to a constrained weighted least squares problem. Using this property, we develop robust decomposition methods to address scenarios where inputs for the group Shapley value are missing. We first apply our methodology to a simple Roy model and then illustrate its usefulness by revisiting two published papers.
\\ \\
\textsc{Keywords}: Shapley value, explainable artificial intelligence, structural model
\end{abstract}

\maketitle


\doublespacing


\section{Introduction}


The Shapley value, first introduced by \citet{ShapleyValue}, is a foundational solution concept in cooperative game theory, with desirable properties supported by axiomatic approaches. It is also sustainable in non-cooperative settings \citep[e.g.,][in sequential bargaining]{gul1989bargaining}. \citet{Shorrocks:2013} popularized its use in empirical applications, primarily for decomposing $R^2$ within linear regression models. 
Many studies have used the Shapley-Shorrocks decomposition: see, e.g., \citet{allen2014information}, \citet{allen2014trade}, 
\citet{dustmann2014out} and \citet{henderson2018global} among others.  
Most notably since the work of \citet{LundbergLee}, Shapley values have become one of the main tools of explainable artificial intelligence (AI). They are likewise recognized as useful interpretational tools in other disciplines, such as global sensitivity analysis \citep{Owen:2014}.

We propose to make a use of a particular variant which we call the \emph{group Shapley value}. Specifically, we use it to understand the outcomes of counterfactual simulations in a structural economic model by quantifying the importance of various elements embedded within the model.
In a structural model, the crucial task involves estimating or calibrating the underlying model parameters. Consequently, any changes to these parameter values will directly influence the outcomes of the structural model. Our motivating example is from \citet{DV2019}, who developed a method to disentangle sources of capital misallocation using data from both China and the United States.
Their model includes various components such as shocks to productivity, investment adjustment costs, uncertainty, and so on. They are specified via a number of parameters that are estimated or calibrated: Table~\ref{tab:DV2019:intro} reproduces the parameter estimates from \citet{DV2019}.

\begin{table}[htbp]
{\tiny
\caption{Parameter Estimates from \citet{DV2019}}\label{tab:DV2019:intro}
\begin{center}
\begin{tabular}{ccccccccc}
\hline\hline
 & Production & \multicolumn{2}{c}{Productivity} & Adjustment & Uncertainty & \multicolumn{3}{c}{Other Factors} \\
 & Curvature & Persistence  & Variance &  Costs &  & Correlated & Transitory & Permanent \\
 \hline
Parameter & $\alpha$ & $\rho$ & $\sigma_\mu^2$ & $\xi$ & $V$ & $\gamma$ & $\sigma_\varepsilon^2$ & $\sigma_\chi^2$  \\
\hline
China & 0.71 & 0.91 & 0.15 & 0.13 & 0.10 & $-0.70$ & 0.00 & 0.41 \\
U.S. & 0.62 & 0.93 & 0.08 & 1.38 & 0.03 & $-0.33$ & 0.03 & 0.29 \\
\hline
\end{tabular}
\end{center}
\parbox{6in}{
Notes. This table is reproduced from Tables 1 and 2 in \citet{DV2019}. Information on $\alpha$ can be found on page 2546 in their paper. 
}
}
\end{table}

In general, it would be challenging to compare the relative importance across different components of the parameters because they tend to be defined on totally non-comparable scales. For example, in \citet{DV2019}, the changes in the parameter estimates directly influence the main outcome variable, namely, dispersion in average revenue products of capital ($\Delta \sigma^2_{arpk}$ using their notation); however, they are non-comparable across different components.

\begin{table}[htbp]
{\small
\begin{center}
\centering
\caption{Contributions to Captial Misallocation}\label{tab:DV2019:intro:more}
\begin{tabular}{ccccccccc}
\hline 
Parameter & $\xi$ & $V$ & $\gamma$ & $\sigma_\varepsilon^2$ & $\sigma_\chi^2$  \\
\hline
China & 0.01 & 0.10 & 0.44 & 0.00 & 0.41  \\
U.S. & 0.05 & 0.03 & 0.06 & 0.03 & 0.29 \\
\hline
\end{tabular}
\end{center}
\parbox{5in}{
Notes. This table is reproduced from Table 3 in \citet{DV2019}.
}
}
\end{table}

One common practice in economics is to apply the principle of \emph{ceteris paribus}, measuring the effect of a component of interest by turning it on (or changing its value), while turning off all the others (or fixing all the others at constant values). 
For example, Table 3 in \citet{DV2019}, which is reproduced here as Table~\ref{tab:DV2019:intro:more}, reports the relative contributions of adjustment costs ($\xi$),
uncertainty ($V$), and other factors ($\gamma$, $\sigma_\varepsilon^2$, and $\sigma_\chi^2$) under the assumption that only the factor of interest is operational. 
This type of evaluation has been popular but the relative contributions do not necessarily sum to one, as acknowledged by \citet{DV2019}.

The ceteris paribus exercise in Table~\ref{tab:DV2019:intro:more} is applied to each country, thus explaining how much individual factors affect each country's capital misallocation. Their method does not directly compare the United States with China; nor does it account for the impact of replacing a particular factor from China to the United States. 
As an alternative, we propose a different decomposition method based on the group Shapley value. 
Our framework needs \emph{two sets of parameters} to compare (e.g., China and the United States in Table~\ref{tab:DV2019:intro}).
Our application of the group Shapley value decomposition generates unique additive contributions for the changes between the two sets of parameters, thereby providing a coherent method for quantitatively evaluating the importance of different components of the model. Decomposition results are summarized in a table, where their relative shares sum to one, allowing for ranking of importance by magnitude.



Generally speaking, there are two modes of sensitivity analysis: global sensitivity analysis \citep[e.g.,][on variance decomposition]{Owen:2014} and local sensitivity analysis \citep[e.g.,][within the generalized method of moments (GMM) framework]{AGS2017}. Our application of the Shapley value is more closely related to the concept of global sensitivity analysis. Although our decomposition may resemble a ``derivative'' to some readers, unlike a regular derivative, the Shapley value can be defined for a non-differentiable ``utility'' function and our setting is not limited to variance decomposition or the GMM framework. 

Although Shapley values are widely used across different disciplines, to the best of our knowledge, they have not been adopted to interpret counterfactual simulations in structural economic models. One might wonder in what sense it would be useful to adopt the framework of the Shapley value in this context. It is well known that the Shapley value is the unique solution that satisfies the four cooperative game theory axioms: Linearity, Dummy, Symmetry, and Efficiency \citep[see, e.g.,][]{moulin2004fair}. When analyzing the impacts of changes in the underlying parameters of a highly nonlinear and complex structural model, where the parameters can be partitioned into multiple groups, it is useful to assign importance of each group to explain the changes in the outcome of the counterfactual simulation. To evaluate these contributions, we may prefer the following properties: separable additive contributions of each group (i.e., Linearity); zero contribution if a change in one group of parameters has no impact at all (i.e., Dummy); identical contributions if two different groups produce identical marginal impacts (i.e., Symmetry); and a sum of all additive contributions that equals the total changes in the outcome (i.e., Efficiency). In other words, if we wish to retain all four axiomatic properties in a decomposition exercise, our proposed method is the only solution that satisfies them. Thus, we regard our approach as an attractive decomposition method based on a strong theoretical foundation. Furthermore, from a practical perspective, we believe our method will help practitioners produce an importance table that is as easily interpretable as a regression table.

The remainder of the paper is organized as follows. 
In Section~\ref{sec:group-shapley}, we provide the definition of the group Shapley value;
in Section~\ref{sec:ls-form}, we show that the group Shapley value can be characterized as a solution to a constrained weighted least squares problem.
Additionally, we consider scenarios where some input for the group Shapley value is missing and propose robust decomposition methods to address this issue.
In Section~\ref{sec:xai}, we describe a canonical use of the Shapley value in the machine learning community.
In Section~\ref{sec:counterfactual}, we provide a novel use of the group Shapley value in order to quantitatively evaluate the different components in counterfactual simulations that are generated by structural models, using a simple Roy model \`{a} la \citet{honore2017poor}.
In Section~\ref{sec:example}, we demonstrate the usefulness of our framework by revisiting \citet{DV2019} and offering new perspectives on contributing factors for capital misallocation.
Specifically, we offer an output table for Shapley value decomposition where we can rank the importance of parameter changes by magnitude.
In Section~\ref{sec:example:more}, we revisit \citet{CGT2016} and use its setting to illustrate the value of our decomposition methods when some input for the group Shapley value is missing. 
We show that our method is effective in providing robust estimates, even when utility computations are costly.
Section~\ref{sec:conclusions} provides the concluding remarks and 
Appendix~\ref{sec:appendix} gives the proof omitted in the main text.

\section{Group Shapley Value}\label{sec:group-shapley}


We start with \citet{gul1989bargaining}'s generalized Shapley value. 
Let $P := \{ 1, 2, \ldots, p \}$ denote a set with $p$ elements. 
Consider a function $g: 2^P \mapsto \mathbb{R}$ that satisfies $g(\emptyset) = 0$. 
Let $\mathcal{P}$ denote the set of all possible partitions of $P$.
For each $\Pi \in \mathcal{P}$ and $M \in \Pi$, 
\citet{gul1989bargaining} defines the generalized Shapley value by 
\begin{align}
\phi_g (M, \Pi) &:=
\sum_{\Psi \subset \Pi} 
\frac{(|\Psi|-1)! (|\Pi| - |\Psi|)!}{|\Pi|!}
\left[ 
g\left( \bigcup_{A \in \Psi} A \right)
-
g\left( \bigcup_{A \in \Psi \setminus \{ M \}} A \right)
\right] \label{gsv} \\
&= \frac{1}{ | \Pi |} \sum_{k=1} ^{| \Pi |}   \underbrace{\frac{1}{\binom{|\Pi|-1}{k-1}} 
 \sum_{\Psi \subset \Pi, |\Psi| =k, M \in \Psi  } \left[ g\left( \bigcup_{A \in \Psi} A \right) - g\left( \bigcup_{A \in \Psi \setminus \{ M \}} A \right) \right]}_{\text{Average change when $M$ is removed from a set $\Psi$ with $|\Psi|=k$.}}. \nonumber
\end{align} 
Here, we use Greek capital letters (e.g., $\Pi$ and $\Psi$) to denote a partition of $P$
and Roman capital letters  (e.g., $M$ and $A$) to denote an element in a partition. 
Furthermore, $|\Psi|$ denotes the number of elements in partition $\Psi$. For ease of notation, we drop the subscript $g$ and use $\phi (M, \Pi)$ to denote $\phi_g (M, \Pi)$ when the context is clear.
Note that the Shapley value  is $\phi (\{i\}, \{ \{j\}: j \in P \})$ with $M = \{i\}$ and
$\Pi = \{ \{j\}: j \in P \}$. 
The generalized Shapley value can also be written as 
\begin{align}\label{gsv-alt}
\phi (M, \Pi) =
\sum_{\Psi \subset \Pi \setminus \{ M \}} 
\frac{|\Psi|! (|\Pi| - |\Psi|-1)!}{|\Pi|!}
\left[ 
g\left( \bigcup_{A \in \Psi \cup \{ M \} } A \right)
-
g\left( \bigcup_{A \in \Psi}  A \right)
\right].
\end{align} 
Two expressions \eqref{gsv} and \eqref{gsv-alt} are identical; \eqref{gsv} is expressed in terms of the marginal changes in the utility from subtracting $M$, whereas \eqref{gsv-alt} is given via the marginal changes in the utility from adding $M$.
Instead of calling $\{ \phi (M, \Pi): M \in \Pi \}$ the generalized Shapley value, we may term it
the \emph{group Shapley value} because it resembles group lasso in variable selection \citep[e.g.,][]{yuan2006model} 
and it may better describe our application of the generalized Shapley value.  
In many economic applications, it would be reasonable to assume a known group structure. 
See Section~\ref{sec:counterfactual} for an example of the group structure in a simple Roy model.

The original Shapley value provides a principled approach to allocating the total utility to individual elements by delivering a unique valuation that satisfies the fairness axioms in cooperative game theory \citep{ShapleyValue}. The fairness axioms mathematically describe desired conditions for a valuation as a functional of a function $g$, namely Linearity, Dummy, Symmetry, and Efficiency.
Extending this original axiomatic characterization, we present a set of axioms for the generalized Shapley value. 
\begin{itemize}
    \item Linearity: $\phi_g$ satisfies the linearity axiom if $\phi_{g_1+ g_2} = \phi_{g_1} + \phi_{g_2}$ for any set functions $g_1$ and $g_2$.
    \item Dummy: $\phi_g$ satisfies the dummy axiom if $g(S\cup A) = g(S)$ for any $S \subseteq \Pi$ implies $\phi_g (A) = 0$ for any set function $g$. 
    \item Symmetry: $\phi_g$ satisfies the symmetry axiom if $g(S\cup A) = g(S \cup B)$ for $S \subseteq \Pi \backslash \{A, B\}$ implies $\phi_g (A) = \phi_g (B)$ for any set function $g$.
    \item Efficiency: $\phi_g$ satisfies the efficiency axiom if $\sum_{M \in \Pi} \phi(M, \Pi) = g(\bigcup_{A \in \Pi}  A)$ for any set function $g$.
\end{itemize}

With these axioms, the following proposition is directly derived from Section 5.5 of \citet{moulin2004fair}.
\begin{prop}
A generalized Shapley value is a unique function that satisfies Linearity, Dummy, Symmetry, and Efficiency axioms.
\end{prop}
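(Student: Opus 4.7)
The plan is to reduce the statement to the classical Shapley uniqueness theorem by viewing each partition $\Pi$ as the set of ``players.'' Fix $\Pi \in \mathcal{P}$ and define an induced characteristic function $g_\Pi : 2^\Pi \to \mathbb{R}$ by $g_\Pi(\Psi) := g\bigl(\bigcup_{A \in \Psi} A\bigr)$, so that $g_\Pi(\emptyset) = 0$. Comparing \eqref{gsv-alt} with the standard Shapley formula shows that $\phi_g(M,\Pi)$ is exactly the classical Shapley value of player $M$ in the cooperative game $g_\Pi$ on player set $\Pi$. Under this identification, the four axioms in the proposition translate line by line into the usual Linearity, Dummy, Symmetry, and Efficiency axioms for a game on $|\Pi|$ players, so the proposition becomes Shapley's classical theorem, as recorded in Section 5.5 of \citet{moulin2004fair}.

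With this reduction in hand, I would verify existence by checking directly that $\phi_g$ defined through \eqref{gsv}--\eqref{gsv-alt} satisfies all four axioms. Linearity is immediate from the linear dependence of the formula on $g$. For Dummy, if $A$ adds nothing on top of any sub-collection, every bracketed marginal contribution in \eqref{gsv-alt} vanishes. For Symmetry, the bijection $\Psi \mapsto (\Psi \setminus \{A\}) \cup \{B\}$ between subsets of $\Pi \setminus \{B\}$ containing $A$ and subsets of $\Pi \setminus \{A\}$ containing $B$ pairs up equal marginal-contribution terms. For Efficiency, the standard argument averaging marginal contributions over a uniformly random ordering of $\Pi$ (equivalently, direct cancellation after regrouping the combinatorial weights) yields $\sum_{M \in \Pi} \phi_g(M,\Pi) = g_\Pi(\Pi) = g(\bigcup_{A \in \Pi} A)$.

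For uniqueness I would use the unanimity-game basis. For each nonempty $T \subseteq \Pi$, let $u_T(\Psi) = 1$ if $T \subseteq \Psi$ and $0$ otherwise. By M\"obius inversion, every $g_\Pi$ has a unique decomposition $g_\Pi = \sum_{\emptyset \neq T \subseteq \Pi} c_T\, u_T$, so by Linearity any axiom-satisfying functional is determined once $\phi_{u_T}(M,\Pi)$ is determined for each $T$. For $M \notin T$, Dummy forces $\phi_{u_T}(M,\Pi) = 0$; for any $M, M' \in T$, Symmetry forces $\phi_{u_T}(M,\Pi) = \phi_{u_T}(M',\Pi)$; and Efficiency together with $u_T(\Pi) = 1$ then forces this common value to equal $1/|T|$. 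Thus any axiom-satisfying functional must coincide with $\phi_g$.

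The only real obstacle is bookkeeping: the axioms in the proposition are stated slightly informally, writing $g(S \cup A)$ where $S$ is a sub-collection of groups in $\Pi$ and $A \in \Pi$ is a single group, so I would open the proof by explicitly rewriting each axiom as a statement about the induced game $g_\Pi$ before invoking the classical theorem. Beyond this translation, no new idea beyond the standard Shapley argument is required.
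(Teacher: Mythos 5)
Your proof is correct and takes essentially the same route the paper intends: the paper gives no written argument for this proposition, stating only that it is ``directly derived from Section 5.5 of \citet{moulin2004fair},'' and that derivation is exactly the reduction you make explicit---treating the blocks of $\Pi$ as players in the induced game $g_\Pi(\Psi) = g\bigl(\bigcup_{A \in \Psi} A\bigr)$, checking the four axioms for the formula in \eqref{gsv-alt}, and running the standard unanimity-basis uniqueness argument. Your write-up is simply the fully spelled-out version of the paper's citation (with the minor, standard caveat that since the stated Linearity axiom is only additivity, uniqueness on a scalar multiple $c_T u_T$ should be obtained by applying Dummy, Symmetry, and Efficiency to $c_T u_T$ directly rather than by scaling $\phi_{u_T}$).
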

A generalized Shapley value gives a unique way to attribute $g(P)$ to individual groups $M \in \Pi$ while satisfying the four axioms above. In other words, no other evaluation function satisfies one or more of these axioms. 

\section{Group Shapley Value via Constrained Weighted Least Squares}\label{sec:ls-form}


The Shapley value has multiple equivalent formulations. 
For example, 
\citet{charnes1988extremal},
\citet{LundbergLee},
and
\citet{aas2021explaining}
express the Shapley value as the optimal solution to a constrained quadratic minimization problem, namely a constrained weighted least squares estimator.
The following proposition is a straightforward application of Theorem 3 in \citet{charnes1988extremal} to our setting.

\begin{prop}[\citet{charnes1988extremal}]\label{opt-result}
The group Shapley value
$\{ \phi (M, \Pi): M \in \Psi \subsetneqq \Pi, M \neq \emptyset \}$ solves the following optimization problem:
\begin{align*}
\min_{ \{\phi_M  : M \in \Psi \subsetneqq \Pi, M \neq \emptyset \} } 
\sum_{\Psi \subsetneqq \Pi}  
\left\{ 
g\left( \bigcup_{M \in \Psi}  M \right)  -  \sum_{M \in \Psi} \phi_M 
\right\}^2 k ( \Pi, \Psi )
\; \text{ subject to }  \; \sum_{M \in \Pi} \phi_M = g \left(  P \right),
\end{align*}
where
\begin{align*}
k ( \Pi, \Psi ) := 
\begin{pmatrix}
|\Pi| - 2 \\
|\Psi| - 1
\end{pmatrix}^{-1}.
\end{align*}
\end{prop}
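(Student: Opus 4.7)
The plan is to reduce the proposition to a direct application of Theorem 3 of \citet{charnes1988extremal} by treating the partition $\Pi$ itself as the ``player set'' of an induced cooperative game. Specifically, I would define an auxiliary set function $\tilde{g}: 2^{\Pi} \to \mathbb{R}$ by
\[
\tilde{g}(\Psi) := g\!\left( \bigcup_{M \in \Psi} M \right), \qquad \Psi \subseteq \Pi.
\]
Because $\tilde{g}(\emptyset) = g(\emptyset) = 0$, the pair $(\Pi, \tilde{g})$ is a legitimate cooperative game on $|\Pi|$ players, and the weight $k(\Pi,\Psi) = \binom{|\Pi|-2}{|\Psi|-1}^{-1}$ is exactly the Shapley kernel weight for that game.

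The key observation is that, comparing with \eqref{gsv}, the group Shapley value $\phi(M,\Pi)$ coincides exactly with the classical Shapley value of ``player'' $M$ in the game $(\Pi,\tilde{g})$: the coefficients $(|\Psi|-1)!(|\Pi|-|\Psi|)!/|\Pi|!$ and the marginal-contribution terms $\tilde g(\Psi)-\tilde g(\Psi\setminus\{M\})$ match by construction. With this identification in hand, Theorem 3 of \citet{charnes1988extremal} --- which expresses the ordinary Shapley value on an $n$-player game as the minimizer of the stated weighted quadratic with Shapley kernel weights $\binom{n-2}{s-1}^{-1}$, subject to the efficiency constraint --- applies directly with $n=|\Pi|$, $s=|\Psi|$, and characteristic function $\tilde{g}$. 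Substituting $\tilde{g}(\Psi)=g(\bigcup_{M\in\Psi} M)$ into that theorem produces exactly the objective displayed in the proposition; the efficiency condition $\sum_{M\in\Pi}\phi_M = \tilde{g}(\Pi)$ becomes $\sum_{M\in\Pi}\phi_M = g(P)$. Uniqueness of the minimizer is then automatic, since the objective is strictly convex on the affine hyperplane cut out by the constraint.

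If I were to give a self-contained argument rather than quoting Charnes et al., I would set up the Lagrangian with multiplier $\lambda$ for the efficiency constraint, differentiate with respect to each $\phi_M$ to obtain the first-order system
\[
\sum_{\Psi \ni M,\, \Psi \subsetneq \Pi} k(\Pi,\Psi)\!\left[ \tilde g(\Psi) - \sum_{N \in \Psi} \phi_N \right] = \tfrac{1}{2}\lambda,
\]
substitute the closed-form expression \eqref{gsv-alt} for $\phi(M,\Pi)$, and verify that the resulting equality reduces, after collecting terms by $|\Psi|=s$, to a telescoping identity among the Shapley kernel weights $\binom{|\Pi|-2}{s-1}^{-1}$ and the factorial coefficients $s!(|\Pi|-s-1)!/|\Pi|!$. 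The main obstacle would be precisely this binomial-coefficient identity --- it is the technical core of the Charnes et al. argument --- but since we may freely invoke their theorem, the actual work here amounts only to the bookkeeping translation between the original game on $P$ and the induced game on $\Pi$.
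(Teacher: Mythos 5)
Your proposal is correct. Your primary route---viewing the blocks of $\Pi$ as the players of an induced game $\tilde g(\Psi)=g(\bigcup_{M\in\Psi}M)$, noting that the group Shapley value of \eqref{gsv} is then literally the classical Shapley value of $(\Pi,\tilde g)$, and invoking Theorem~3 of \citet{charnes1988extremal} with $n=|\Pi|$---is a legitimate and more economical argument than what the paper writes down; the reduction is watertight because the blocks of a partition are disjoint and nonempty, so $\Psi\mapsto\bigcup_{M\in\Psi}M$ is injective and $\tilde g$ is a well-defined characteristic function with $\tilde g(\emptyset)=0$. The paper itself describes the proposition as a straightforward application of that theorem but then, for self-containment, carries out the full derivation in Appendix~\ref{sec:appendix}: it forms the Lagrangian, writes the first-order conditions, groups terms by coalition size $j=|\Psi|$ via the quantities $\mu_j(M)$ and $x_j^*(M)$, solves the binomial-coefficient algebra forward to obtain the closed form \eqref{gsv-der} for $\phi_M^*$, and only then verifies that this closed form agrees with the Shapley formula. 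Your sketched self-contained alternative matches this in outline, but differs in how the verification is closed: you propose substituting \eqref{gsv-alt} into the stationarity conditions and checking a telescoping identity among the kernel weights, whereas the paper avoids that identity by exploiting linearity of both expressions in $g$ and checking equality only on the basis of elementary games $g_\Upsilon$ (the indicator of a single coalition $\Upsilon$), for which both sides collapse to explicit two-case formulas. That basis trick is the one labor-saving device in the paper's argument that your sketch does not anticipate, but nothing in your plan would fail---the identity you would need to verify is exactly the technical core you correctly identify as the content of the Charnes et al.\ theorem.
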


The proof follows steps similar to those used in \citet{charnes1988extremal}.
However, they did not specifically consider the case of the group Shapley value.
For the purpose of self-containment, we provide details of the proof in Appendix~\ref{sec:appendix}.


While the group Shapley value can be expressed as closed-forms as shown in \eqref{gsv} and \eqref{gsv-alt}, Proposition~\ref{opt-result} yields a new expression because the optimization problem is a weighted linear regression problem with linear constraints. To be more specific, we consider a vector $(\Psi_1, \dots, \Psi_{2^{|\Pi|}-2})$ of subsets of $P$ where $\Psi_i$'s are possible unions of $|\Pi|$'s subset except $\emptyset$ and $P$. We set $\bm{g} = (g(\Psi_1), \dots, g(\Psi_{2^{|\Pi|}-2}))$ and denote a diagonal matrix whose $i$-th diagonal element is $k(\Pi, \Psi_i)$ by $\bm{K}$. Each $\Psi_i$ can be uniquely represented by a $|\Pi|$-dimensional binary vector, and we denote a binary matrix whose $i$-th row represents $\Psi_i$ by $\bm{D}$. Then, the optimization problem in Proposition~\ref{opt-result} can be re-expressed as:
\begin{align*}
    \min_{\bm{\phi}} ( \bm{g} - \bm{D} \bm{\phi})^\top \bm{K} ( \bm{g} - \bm{D} \bm{\phi})
\end{align*}
subject to
\begin{align*}
    \bm{\phi}^\top \mathds{1}_{|\Pi|} = g(P),
\end{align*}
where $\mathds{1}_{|\Pi|}$ is a $|\Pi|$ dimensional vector of ones. Moreover, the solution of this constraint linear regression problem is given as follows:
\begin{align}\label{ls-solutions}
    A^{-1}\left(b -\mathds{1}_{|\pi|} \frac{ \mathds{1}_{|\pi|}^\top A^{-1}b - g(P) }{\mathds{1}_{|\pi|}^\top A^{-1} \mathds{1}_{|\pi|}}\right),
\end{align}
where $A = \bm{D}^\top \bm{K} \bm{D}$ and $b = \bm{D}^\top \bm{K} \bm{g}$.

\subsection{Toy Example}\label{sec:toy}

Suppose that $P = \{1,2,\ldots,5 \}$ and $\Pi = \{ \{1 \}, \{2,3\}, \{4,5\} \}$.
As an example, 
the group Shapley value with $M = \{1\}$ is written as 
\begin{align*}
\phi_g (M, \Pi)
&= \sum_{\Psi \subset \Pi} \frac{(|\Psi|-1)! (|\Pi| - |\Psi|)!}{|\Pi|!} \left[ g\left( \bigcup_{A \in \Psi} A \right) - g\left( \bigcup_{A \in \Psi \setminus \{ M \}} A \right) \right]\\
&= \frac{1}{3} \Bigl( g\left({\{1\}}\right) - g\left( {\emptyset} \right) \Bigr) \\
&+ \frac{1}{3} \Biggl( \frac{\Bigl( g\left({\{1\}} \cup {\{2,3\}} \right) - g\left( {\{2,3\}} \right) \Bigr) + \Bigl( g\left({\{1\}} \cup {\{4,5\}} \right) - g\left( {\{4,5\}} \right) \Bigr) }{2} \Biggr)\\
&+ \frac{1}{3} \Bigl( g\left({\{1\}} \cup {\{2,3\} \cup \{4, 5\}} \right) - g\left( {\{2,3\} \cup \{4, 5\}} \right) \Bigr).
\end{align*}
To write the least squares problem, define
\begin{align*}
\bm{g} := 
\begin{pmatrix}
g ( \{1 \} ) \\
g ( \{2,3 \} ) \\
g ( \{4,5 \} ) \\
g (\{1 \} \cup  \{2,3 \}) \\
g ( \{1 \} \cup  \{4,5 \}) \\
g ( \{2,3 \} \cup  \{4,5 \}) \\
\end{pmatrix}, \;
\bm{D} :=
\begin{pmatrix}
1 & 0 & 0 \\
0 & 1 & 0 \\
0 & 0 & 1 \\
1 & 1 & 0 \\
1 & 0 & 1 \\
0 & 1 & 1 \\
\end{pmatrix}, \;
\bm{\phi} := 
\begin{pmatrix}
\phi_{ \{1 \} } \\
\phi_{ \{2, 3 \} } \\
\phi_{ \{4, 5 \} } 
\end{pmatrix}.
\end{align*}
In this example, 
$k ( \Pi, \Psi ) = 1$ for any $\Psi \subsetneqq \Pi$, implying that $\bm{K}$ is the identity matrix.
Thus, $\bm{\phi}$ minimizes
\begin{align}\label{cls-def}
( \bm{g} - \bm{D} \bm{\phi})^\top  ( \bm{g} - \bm{D} \bm{\phi})
\end{align}
subject to 
$$
\phi_{ \{1 \} } + \phi_{ \{2, 3 \} } + \phi_{ \{4, 5 \} } = g(P).
$$
This is simply a constrained least squares problem, which can be solved easily as shown in \eqref{ls-solutions}. 
However, when $|\Pi|$ gets large, the number of rows of $\bm{D}$ increases very rapidly:
$2^{|\Pi|}-2$.  In the literature \citep[e.g.,][]{LundbergLee}, this difficulty is avoided by sampling the rows of $\bm{D}$
according to the probability distribution induced by the Shapley kernel weights $k ( \Pi, \Psi )$. See Section~\ref{sec:xal-algo} for details.

\subsection{Shapley Bounds and Shapley Minimum Norm Solutions}\label{sec:SB-SMNS}

In practice, it is often challenging to observe every element of the utility vector $\bm{g} = (g(\Psi_1), \dots, g(\Psi_{2^{|\Pi|}-2}))$, especially when the computational cost of each $g$ is expensive. For instance, \citet{CGT2016}, who analyze the contribution of firing cost, tariff rate, and iceberg trade cost to aggregate statistics in a \textit{ceteris paribus} manner, do not provide the entire utility values. We will revisit this example in Section~\ref{sec:example:more}. In such situations, the Shapely value in \eqref{ls-solutions} cannot be calculated. To address this problem, we propose Shapley bounds and Shapley Minimum Norm Solution that infer the Shapley values under user-specified linear constraints.

For $r \in \mathbb{N}$, $A_{\mathrm{const}} \in \mathbb{R}^{r \times |\bm{g}|}$, and $b_{\mathrm{const}} \in \mathbb{R}^r$, we suppose the utility vector $\bm{g}$ satisfies the linear constraints $A_{\mathrm{const}} \bm{g} \leq b_{\mathrm{const}}$. These linear constraints allow researchers to formulate their domain expertise and infer a missing part of $\bm{g}$. Since the Shapley value $\phi$ can be seen as a linear function of $\bm{g}$, we denote it by $\phi(\bm{g})$. Then, for $j \in |\Pi|$, Shapley Upper Bounds (SUB) for the $j$-th element of $\phi$ is defined as follows.
\begin{align*}
\mathrm{max}_{\tilde{\bm{g}} \in \mathcal{G}}  c_j^T \phi(\tilde{\bm{g}}) \quad \mathrm{s.t.} \quad A_{\mathrm{const}} \tilde{\bm{g}} \leq b_{\mathrm{const}},
\tag{SUB}
\end{align*}
where $c_j \in \{0,1\}^{|\Pi|}$ is the $j$-th canonical basis in $\mathbb{R}^{|\Pi|}$ and $\mathcal{G} := \{ \bm{h} \in \mathbb{R}^{|\bm{g}|} : \bm{h}_i = \bm{g}_i $ if $\bm{g}_i$  is observed $\}$. That is, SUB is the maximum possible Shapely value that satisfies given linear constraints. Similarly, we define Shapley Lower Bounds (SLB) as follows.
\begin{align*}
\mathrm{min}_{\tilde{\bm{g}} \in \mathcal{G}}  c_j^T \phi(\tilde{\bm{g}}) \quad \mathrm{s.t.} \quad A_{\mathrm{const}} \tilde{\bm{g}} \leq b_{\mathrm{const}}
\tag{SLB}
\end{align*}

While SUB and SLB can provide informative bounds for the Shapley value, it does not satisfy the efficiency axiom in general, and its value does not sum to $g(P)$. To address this problem, we propose the Shapley Minimum Norm Solution (SMNS) that solves the following optimization problem.
\begin{align*}
\mathrm{min}_{\tilde{\bm{g}} \in \mathcal{G}} \lVert \phi(\tilde{\bm{g}}) - \frac{g(P)}{N} \mathds{1}_{N} \rVert_2 \quad \mathrm{s.t.} \quad A_{\mathrm{const}} \tilde{\bm{g}} \leq b_{\mathrm{const}}
\tag{SMNS},
\end{align*}
where $\mathds{1}_{N}$ is the $N$-dimensional one vector and $\lVert v \rVert_2$ is the $\ell^2$-norm of a vector $v$. 
Here, $\frac{g(P)}{N} \mathds{1}_{N}$ can be seen as the best guess of the Shapley value when the practitioner believes the contribution of every factor is identical, and SMNS infers the Shapely value that is not very different from $\frac{g(P)}{N} \mathds{1}_{N}$ while satisfying the linear constraints.

Note that the solution ($\tilde{g}$) to any of the optimization problems above is a utility. The resulting Shapley value is then obtained by plugging solution $\tilde{g}$ into equation \eqref{ls-solutions}. Either reporting the bounds [SLB, SUB] or using SMNS can be viewed as a reasonable alternative in the presence of missing input. We use both approaches in Section~\ref{sec:example:more}.

\section{Explainable Artificial Intelligence}\label{sec:xai}

The Shapley value and its variations have been deployed in various machine learning applications. As mentioned in the introduction, \cite{LundbergLee} advocate its use in the model interpretation problem (also known as explainable artificial intelligence), where it is employed to attribute a model's prediction to features. \cite{ghorbani2019data} extend its utility to the data valuation problem, introducing a method to quantify the impact of individual data. \cite{pmlr-v151-kwon22a} relax the efficiency axiom and use a semivalue in evaluating data values. \cite{wang2020principled} leverage the Shapley value to quantify the contribution of a local model to a global model in federated learning settings. In addition, the Shapley value has been proposed as a fair valuation method in data marketplaces \citep{tian2022private}. 
For an in-depth exploration of the literature on the Shapley value and its applications in machine learning, we refer the readers to the comprehensive review by \cite{rozemberczki2022shapley}.
In this section, we provide a brief description of an application of Shapley value in the context of explainable artificial intelligence.

For a vector of covariates $x$, 
let $f(x)$ denote a prediction model of a real-valued outcome of interest 
using $x$. 
For $A \subset P$, decompose $x = (x_A, x_{P \setminus A})$.
Here, $x_A$ denotes a subvector of $x$ whose indices are in $A$. For example, if $x = (x_1, x_2, x_3)$ and $A = \{ 1, 2 \}$, $x_A = (x_1, x_2)$.   
For each $x$ and $A$, define 
\begin{align}\label{def-g}
g_{x} ( A  ) := 
\mathbb{E}_{X_{P \setminus A}} [ f(x_A, X_{P \setminus A})   ] - \mathbb{E}_{X} [ f(X)   ],
\end{align}
where the upper case refers to a random vector. 
Note that $g_{x} ( A  )$ is a function of $x_A$.

\subsection{Algorithm for Explainable Artificial Intelligence}\label{sec:xal-algo}

\begin{enumerate}[(i)]
\item Choose $x=x^\ast$, where $x^\ast$ could be an observed vector of an individual we would like to study.

\item Predict $f(x^\ast)$ using a machine-learning prediction method.

\item Specify $\Pi$. 

\item Sample $q$ rows of $\bm{D}$ according to the probability distribution induced by the Shapley kernel weights $k ( \Pi, \Psi )$, where $q \gg |\Pi|$. 

\item For each row of $\bm{D}$, compute a sample analog of \eqref{def-g}.

\item Solve the constrained least squares problem in \eqref{cls-def}.

\end{enumerate}


We end this section by commenting that 
there have been some debates on causal interpretations of the Shapley values for explainable AI, in particular, when the covariates are dependent on each other.
See, e.g.,  
\citet{LundbergLee},
\citet{aas2021explaining},
\citet{janzing2020feature},
and
\citet{eskes2020causal}
among others.
This debate is irrelevant for our application of Shapley values to counterfactual simulations.



\section{Counterfactual Simulations}\label{sec:counterfactual}

We propose to use Shapley values for evaluating the different components in counterfactual simulations that are generated by structural models. 

\subsection{A Simple Roy Model \'{a} la \citet{honore2017poor}}
To be explicit, we consider a simple Roy model used in \citet{honore2017poor}. 
In their setup, there are two sectors:  $s \in \{1, 2\}$. Worker $i$ earns sector-specific income
$w_{si1}$ in period 1 in the following form:
\begin{align*}
\log (w_{si1}) = x_{si}^\top \beta_s + \varepsilon_{si1}, 
\end{align*}
where $x_{si}$ is a vector of sector-specific human capital,
$\beta_s$ is a vector of sector-specific parameters,
and $\varepsilon_{si1}$ is a sector-specific unobserved random variable in period 1.
Sector-specific income
$w_{si2}$ in period 2 is generated by
\begin{align*}
\log (w_{si2}) = x_{si}^\top \beta_s + 1 \{ d_{i1} = s \} \gamma_s + \varepsilon_{si2}, 
\end{align*}
where $d_{i1}$ is the sector chosen in period 1,
$\gamma_s$ is a sector-specific parameter that represents the premium of staying in the same sector in period 2, 
and $\varepsilon_{si1}$ is a sector-specific unobserved random variable in period 2.
The unobserved random variables $(\varepsilon_{1it}, \varepsilon_{2it})$
are assumed to be bivariate normally distributed
with mean zeros, variances $(\sigma_1^2, \sigma_2^2)$ and correlation $\tau$,
 and i.i.d. over time.

Workers maximize discounted income. In time period 2, 
$d_{i2} = 1$ and the resulting income  is $w_{i2} = w_{1i2}$ if
\begin{align*}
x_{1i}^\top \beta_1 + 1 \{ d_{i1} = 1 \} \gamma_1 + \varepsilon_{1i2}
>
x_{2i}^\top \beta_2 + 1 \{ d_{i1} = 2 \} \gamma_2 + \varepsilon_{2i2},
\end{align*}
and $d_{i2} = 2$ and $w_{i2} = w_{2i2}$ otherwise.
In time period 1,  $d_{i1} = 1$ if and only if
\begin{align*}
w_{1i1} + \rho \mathbb{E} \left[
\max \{ w_{1i2}, w_{2i2} \} | x_{1i}, x_{2i}, d_{i1} = 1
\right]
>
w_{2i1} + \rho \mathbb{E} \left[
\max \{ w_{1i2}, w_{2i2} \} | x_{1i}, x_{2i}, d_{i1} = 2
\right],
\end{align*}
where $\rho$ is discount factor.

Let $\theta = (\beta_1, \beta_2, \gamma_1, \gamma_2, \sigma_1^2, \sigma_2^2, \tau, \rho)$ denote
the vector of all structural parameters. 
Let $\theta^b$ denote the benchmark vector of parameter values, for example, estimated values,
and $\theta^c$ a counterfactual vector of parameter values.
If we write $(\theta^{c}_{A}, \theta^b_{P \setminus A})$, we mean the combination of parameter values such
that the $A$ subset of $\theta$ is set at the counterfactual values, while setting the $P \setminus A$ subset of 
$\theta$ at the benchmark values.
Let $W = W(\theta)$ denote the observed random variables generated by the model given $\theta$. In the Roy model example, we have that 
$W = (w_{i1}, w_{i2}, d_{i1}, d_{i2}, x_{i1}, x_{i2})$.
Let $f(\theta)$ be the real-valued quantity of interest in a counterfactual simulation
that can be computed by simulating $W$ given $\theta$. 
For example, $f(\theta)$ be a measure of the changes in income inequality from period 1 to period 2. 
Finally, we let 
\begin{align}\label{g-counterfactual}
g( A ) :=  f(\theta^{c}_{A}, \theta^b_{P \setminus A})
- f(\theta^{b}). 
\end{align}

\subsection{Algorithm for Counterfactual Simulations}

\begin{enumerate}
\item Choose $\theta = \theta^b$, where $\theta_b$ could be estimated parameter values using the dataset.

\item Simulate $W$ given $\theta^b$ and evaluate $f(\theta^{b})$.

\item Specify $\Pi$. 

\item Sample $q$ rows of $\bm{D}$ according to the probability distribution induced by the Shapley kernel weights $k ( \Pi, \Psi )$, where $q \gg |\Pi|$. 

\item For each row of $\bm{D}$, 
simulate $W$ given 
$(\theta^{c}_{A}, \theta^b_{P \setminus A})$
and evaluate $g(A) =  f(\theta^{c}_{A}, \theta^b_{P \setminus A})$.

\item Solve the constrained least squares problem in \eqref{cls-def}.

\end{enumerate}

\subsection{An Example: $\Pi$, $\theta^b$, $\theta^c$ and $f(\theta)$}

Suppose that $\tau = 0$ and $\rho = 0.95$ are fixed, as in \citet{honore2017poor}.
Then, 
$P = \{ \beta_1, \beta_2, \gamma_1, \gamma_2, \sigma_1^2, \sigma_2^2 \}$.
Suppose that 
$$
\Pi = \{ 
\{\beta_1, \beta_2\}, 
\{\gamma_1, \gamma_2\}, 
\{\sigma_1^2, \sigma_2^2\}
\}. 
$$
That is, we consider the three groups: the coefficients for sector-specific human capital,
the sector-specific benefits of staying in the same sector,
and 
sector-specific variances.

As an example, we set the benchmark vector of parameter values
at $\theta^b$ at the values used in \citet{honore2017poor}.
That is, 
\[
\beta_1^b = (1,1)^\top,
\beta_2^b = (0.5, 1)^\top,
\gamma_1^b = 0,
\gamma_2^b = 1,
(\sigma_1^2)^b = 2,
(\sigma_2^2)^b = 3.
\]
Suppose that we set  the counterfactual vector $\theta^c$ of parameter values at
\[
\beta_1^c = (1,2)^\top,
\beta_2^c = (0.5, 2)^\top,
\gamma_1^c = 0,
\gamma_2^c = 2,
(\sigma_1^2)^c = 2,
(\sigma_2^2)^c = 6.
\]
If we focus on income inequality, we may consider: 
\begin{itemize}
\item
between-sector inequality: $h_{\mathrm{bs-ineq}, t}(\theta) := \mathbb{E}[ w_{it} | d_{it} = 1] - \mathbb{E}[ w_{it} | d_{it} = 2]$;
\item
within-sector inequality:
$h_{\mathrm{ws-ineq}, t}(\theta; \tau_1, \tau_2) 
:= \mathbb{Q}_{w_{it}} (\tau_1| d_{it} = s ) - \mathbb{Q}_{w_{it}} (\tau_2| d_{it} = s )$,
where 
$\mathbb{Q}_{w_{it}} (\tau| d_{it} = s )$ is the $\tau$-quantile of $w_{it}$ conditional on $d_{it} = s$;
\item
overall inequality:
$h_{\mathrm{overall-ineq}, t}(\theta; \tau_1, \tau_2) 
:= \mathbb{Q}_{w_{it}} (\tau_1 ) - \mathbb{Q}_{w_{it}} (\tau_2 )$,
where 
$\mathbb{Q}_{w_{it}} (\tau )$ is the $\tau$-quantile of $w_{it}$.
\end{itemize}
As an example, we consider the change in overall inequality, measured by the $0.9-0.1$ quantile difference, from period 1 to period 2, that is:
$$
f(\theta) = 
h_{\mathrm{overall-ineq}, 2}(\theta; \tau_1 = 0.1, \tau_2 = 0.9) 
-
h_{\mathrm{overall-ineq}, 1}(\theta; \tau_1 = 0.1, \tau_2 = 0.9). 
$$
This example of $f(\theta)$ is not generally differentiable with respect to $\theta$. 
In consequence, the resulting $g(A)$ can not be viewed as the standard derivative; however, our approach provides how to quantify the importance of changes in the parameters.
Additionally, we offer a global sensitivity analysis, as opposed to a local sensitivity analysis.

As in Section~\ref{sec:toy}, we have that 
\begin{align*}
\bm{g} = 
\begin{pmatrix}
g (\{\beta_1, \beta_2\} ) \\
g ( \{\gamma_1, \gamma_2\} ) \\
g ( \{\sigma_1^2, \sigma_2^2\} ) \\
g (\{\beta_1, \beta_2\} \cup  \{\gamma_1, \gamma_2\} ) \\
g (\{\beta_1, \beta_2\} \cup  \{\sigma_1^2, \sigma_2^2\} ) \\
g ( \{\gamma_1, \gamma_2\} \cup  \{\sigma_1^2, \sigma_2^2\} ) \\
\end{pmatrix}, \;
\bm{D} =
\begin{pmatrix}
1 & 0 & 0 \\
0 & 1 & 0 \\
0 & 0 & 1 \\
1 & 1 & 0 \\
1 & 0 & 1 \\
0 & 1 & 1 \\
\end{pmatrix}, \;
\bm{\phi} = 
\begin{pmatrix}
\phi_{ \{\beta_1, \beta_2\} } \\
\phi_{ \{\gamma_1, \gamma_2\} } \\
\phi_{ \{\sigma_1^2, \sigma_2^2\} } 
\end{pmatrix},
\end{align*}
and 
$k ( \Pi, \Psi ) = 1$ for any $\Psi \subsetneqq \Pi$. 
Thus, $\bm{\phi}$ minimizes
\begin{align*}
( \bm{g} - \bm{D} \bm{\phi})^\top  ( \bm{g} - \bm{D} \bm{\phi})
\end{align*}
subject to 
$$
\phi_{\{\beta_1, \beta_2\} } + \phi_{ \{\gamma_1, \gamma_2\} } + \phi_{  \{\sigma_1^2, \sigma_2^2\} } = g( \{ \beta_1, \beta_2, \gamma_1, \gamma_2, \sigma_1^2, \sigma_2^2 \}).
$$
Again, this is a constrained least squares problem, which can be solved easily. 
The values of $\bm{g}$ are obtained by Monte Carlo simulation with the sample size of $10^7$.

Table~\ref{tab:shapley} gives the counterfactual simulation results. It can be seen that all three components contribute to the increase in inequality. However, 
$\phi_{ \{\gamma_1, \gamma_2\} }$ matters most in this example.
Recall that 
$(\gamma_1^b = 0, \gamma_2^b = 1)$
and
$(\gamma_1^c = 0, \gamma_2^c = 2)$.
In other words, sector 2 specific return to the log wages doubled in period 2, which explains the largest increase in the change in the overall inequality.

\begin{table}[htbp]
\caption{Shapley Value Decomposition}
\begin{center}
\begin{tabular}{ccc}
\hline\hline
Parameter & Value  &  Share \\
\hline
$\phi_{ \{\beta_1, \beta_2\} }$    &  182.5 &  0.30 \\
$\phi_{ \{\gamma_1, \gamma_2\} }$ & 264.4 & 0.43 \\
$\phi_{ \{\sigma_1^2, \sigma_2^2\} }$ & 164.2 & 0.27 \\
\hline
\end{tabular}
\end{center}
Notes. The share of the Shapley value is given by 
$\phi_{ \{ \cdot \} } / g( P )$.
\label{tab:shapley}
\end{table}



In short, we demonstrate that although a measure of income equality from a Roy model can be highly nonlinear in inputs, the group Shapley value generates unique additive contributions from $\Pi$. 
From a practical perspective, we find that Table~\ref{tab:shapley} is just as effortless to interpret as a regression table. Generally speaking, the group Shapley value provides a straightforward method to interpret counterfactual simulation results when
 a structural model simulates the output.  
 
\section{An Application to Capital Misallocation}\label{sec:example}

In this section, we revisit the work of \citet{DV2019}, who developed a method to disentangle sources of capital misallocation---specifically, the dispersion in average revenue products of capital (\emph{arpk}) using observable data on value-added and inputs from both China and United States.

\subsection{The Quantitative Framework of \citet{DV2019}}

For firm $i$ in period $t$, \citet{DV2019} define
the average revenue product of capital (\emph{arpk})  as 
$arpk_{it} = va_{it} - k_{it}$, where 
$va_{it}$ is the log value added and $k_{it}$ is the capital stock.
The parameter of interest is the variance $\sigma_{aprk}^2$ of $arpk_{it}$.
To simulate this quantity, we need to specify a variety of parameters. First of all, we need to choose the value of $\alpha$, which determines the curvature of operating profits.
There are two important dynamics in \citet{DV2019}. One is associated with the log productivity (denoted by $a_{it}$):
\begin{align}\label{a-process}
     a_{it} = \rho a_{it-1} + \mu_{it}, \; \mu_{it} \sim \mathcal{N}(0, \sigma_\mu^2),
\end{align}
where $\rho$ is the autoregressive parameter and $\sigma_\mu^2$ is the variance of the innovation term for productivity (see equation (5) in \citet{DV2019}). 
The other dynamics involves log distortion (denoted by $\tau_{it}$):
\begin{align}\label{tau-process}
    \tau_{it} = \gamma a_{it} + \varepsilon_{it} + \chi_i, \; 
    \varepsilon_{it} \sim \mathcal{N}(0, \sigma_\varepsilon^2),
    \chi_i \sim \mathcal{N}(0, \sigma_\chi^2),
\end{align}
where $\gamma$ determines how the distortion co-moves with productivity,
$\sigma_\varepsilon^2$ is the variance of i.i.d. shocks 
and 
$\sigma_\chi^2$ is the variance of time-invariant firm-specific shocks (see equation (6) in \citet{DV2019}).
The distribution of future productivity ($a_{it+1}$) conditional on the firm’s information set ($\mathcal{I}_{it}$) in period $t$
follows a normal distribution with the posterior mean $E_{it}[a_{it+1}]$ and the posterior variance $V$.
Finally, investment is subject to quadratic adjustment costs and the severity of the adjustment cost is parameterized via $\xi$.
In summary, there are 8 key parameters: $\alpha$, $\rho$, $\sigma_\mu^2$, $\xi$, $V$,
 $\gamma$, $\sigma_\varepsilon^2$, and $\sigma_\chi^2$ (see Table 1 in \citet{DV2019}). 

Recall that Table~\ref{tab:DV2019:intro} reproduces the parameter estimates from \citet{DV2019}.
The value of $\alpha$ is slightly larger for China.
The level of persistence for productivity is similar between the two countries, whereas the variance of productivity is larger in China than in U.S.
The adjustment costs are much higher in the U.S. but uncertainty is larger in China.
Note that $\gamma$ is more negative for China, indicating that the extent to which 
the distortion discourages investment by firms with higher productivity is larger in China than in U.S. Finally, in China, the variance of the transitory shocks is smaller but that of the permanent shocks is larger.

\subsection{\citet{DV2019}'s Decomposition}

\citet[Table 3]{DV2019} report the relative contributions of adjustment costs, uncertainty, and other factors to the \emph{arpk} dispersion under the assumption that only the factor of interest is operational. 
To be more specific, we let
$\theta = (\alpha, \rho, \sigma_\mu^2, \xi, V, \gamma, \sigma_\varepsilon^2,\sigma_\chi^2)$ and 
$f(\theta) = \sigma_{aprk}^2$. Assuming $f(\mathbf{0})=0$, \citet{DV2019} estimate the contribution of each factor by $\phi_{\mathrm{DV}} (A) := f(\theta_A, \mathbf{0}_{P \backslash A}) - f(\mathbf{0})$ where $\theta$ is the parameter estimates in Table~\ref{tab:DV2019:intro}.
The contribution estimation is based on the principle of \textit{ceteris paribus}, measuring the impact of each factor from a hypothetical void setting $f(\mathbf{0})$. However, while it intends to quantify the change when a factor of interest is added, $\phi_{\mathrm{DV}}$ can yield an unrealistic or imprecise assessment of individual impacts because setting some entries of $\theta$ at zeros results in degenerate dynamics. For instance, $V$ is the posterior variance, which needs to be strictly greater than zero.
Furthermore, $\phi_{\mathrm{DV}}$ does not satisfy the Efficiency axiom as it is a weighted sum of utility changes, which is referred to as a semivalue \citep{semivalue_without_efficiency}. This is why the relative contributions do not necessarily sum to $1$, as acknowledged by \citet{DV2019}. We address these two issues in the following subsection.




\subsection{Shapley Value Decomposition}

In this subsection, we carry out a new counterfactual exercise that is different from that of \citet[Table 3]{DV2019}. 
We set the benchmark vector ($\theta_b$) of parameter values using estimates from U.S. and the counterfactual vector ($\theta_c$) using estimates from China.
As each parameter represents a distinct component of the  model developed in \citet{DV2019}, we consider singleton groups:
\begin{align*}
\Pi = \{ 
\{\alpha\}, \{\rho\}, \{\sigma_\mu^2\}, 
\{\xi\}, \{V\}, \{\gamma\}, \{\sigma_\varepsilon^2\},
\{\sigma_\chi^2\}
\}.
\end{align*}
In the data, $\sigma_{aprk}^2$ was 0.92 for China and 0.45 for U.S (see Table 2 in \citet{DV2019}). Thus, the degrees of capital misallocation double by changing the U.S. parameter estimates to those based on Chinese data.
Our Shapley value decomposition will indicate the extent to which each parameter contributes to this increase.



\begin{table}[htbp]
\caption{Shapley Value Decomposition} 
\begin{center}
\begin{tabular}{ccc}
\hline\hline
Parameter & Value  &  Share \\
\hline
$\phi_{ \{\alpha\} }$  &   0.0073  & 0.016 \\
$\phi_{ \{\rho\} }$ & -0.0599 & -0.129 \\
$\phi_{ \{\sigma_\mu^2\} }$ & 0.1677 & 0.361 \\
$\phi_{ \{\xi\} }$ & -0.0588 & -0.126 \\
$\phi_{ \{V\} }$ & 0.0443 & 0.095 \\
$\phi_{ \{\gamma\} }$ & 0.2506 & 0.539 \\
$\phi_{ \{\sigma_\varepsilon^2\} }$ & -0.0061 & -0.013 \\
$\phi_{ \{\sigma_\chi^2\} }$ & 0.1200 & 0.258 \\
\hline
\end{tabular}
\end{center}
Notes. The share of the Shapley value is given by 
$\phi_{ \{ \cdot \} } / g( P )$.
\label{tab:shapley_DV}
\end{table}

Table~\ref{tab:shapley_DV} reports the resulting Shapley value decomposition.
First of all, note that there are positive and negative values for 
$\phi_{ \{ \cdot\} }$ because some changes in the parameter values imply increases in the \emph{arpk} dispersion but other changes indicate decreases. However, the sum of the Shapley shares is 1 by design. 

We now examine each row. The increase of $\alpha$ from $0.62$ to $0.71$ is associated with
$\phi_{ \{\alpha\} } = 0.0073$, implying the share of $0.016$, which is quite small.
The persistence parameter $\rho$ gets slightly smaller moving from the U.S. to China (that is, from $0.93$ to $0.91$). However, the resulting Shapley value is $-0.0599$ with a relatively large share of $-0.129$. We can interpret that a small decrease in the autoregressive parameter in the log productivity is associated with a relatively large reduction in the capital misallocation. 
The variance of the innovation term for productivity increases more substantially from 0.08 to 0.15, resulting in the second largest share in the Shapley value decomposition. It is interesting to note that the change in the adjustment costs looks more visible in the sense that $\xi$ decreases by a factor of 10, while the implied share is only $-0.126$. This shows that the Shapley value decomposition is useful to compare changes in parameters on the same scale. The increase in uncertainty is associated with the share of $0.095$. The largest Shapley share  of $0.542$ comes from $\gamma$, which is negative for both countries. A more negative $\gamma$ means that the distortion plays a larger role in disincentivizing investment by more productive firms. The reduction in the variance of the transitory shocks is small and as a result,  the Shapley share is small as well. The increase in the variance of the permanent shocks is quite large, ranking as the third most important factor in terms of the Shapley share. In short, the two components ($\gamma$ and $\sigma_\chi^2$) of the distortion as well as the variance of the productivity shocks are the three most important factors affecting the increase in the \emph{arpk} dispersion. Overall, our Shapley value analysis reconfirms the quantitative findings in \citet{DV2019}, while providing novel perspectives.

\section{An Application to Globalization}\label{sec:example:more}

\citet{CGT2016} examine decreases in trade barriers, tariffs, and firing costs in an open economy, using establishment-level data from Colombia. Their counterfactual experiments suggest that Colombia's integration into global product markets raised its national income but also increased unemployment, wage inequality, and firm-level volatility.
Table 4 in \citet{CGT2016} provides the results of their counterfactual experiments.
Suppose that we are interested in comparing the last column ``Reforms and globalization'' with the first column ``Baseline'' and construct Shapley value decomposition among three elements: firing cost ($c_f$), tariff rate ($\tau_a$), and iceberg trade cost ($\tau_c$). 
Then, we have the same structure as the toy example in Section~\ref{sec:toy}. That is, we have that 
\begin{align*}
\bm{g} = 
\begin{pmatrix}
g ( \{c_f\} ) \\
g ( \{\tau_a\} ) \\
g ( \{\tau_c\} ) \\
g (\{c_f\} \cup  \{\tau_a\} ) \\
g (\{c_f\} \cup  \{\tau_c\} ) \\
g ( \{\tau_a\} \cup  \{\tau_c\} ) \\
\end{pmatrix}, \;
\bm{D} =
\begin{pmatrix}
1 & 0 & 0 \\
0 & 1 & 0 \\
0 & 0 & 1 \\
1 & 1 & 0 \\
1 & 0 & 1 \\
0 & 1 & 1 \\
\end{pmatrix}, \;
\bm{\phi} = 
\begin{pmatrix}
\phi_{ \{c_f\} } \\
\phi_{ \{\tau_a\} } \\
\phi_{ \{\tau_c\} } 
\end{pmatrix}.
\end{align*}
For self-containment, we re-produce the parameter values in Panel A of Table~\ref{CGT-tab4}.
In Panel A, the benchmark (``Baseline'') parameters are $(c_f^b, \tau_a^b, \tau_c^b) = (0.60, 1.21, 2.50)$, whereas the counterfactual (``Reforms and globalization'') parameters are $(c_f^c, \tau_a^c, \tau_c^c) = (0.30, 1.11, 2.19)$. 
Regarding $g(\cdot)$, we consider the aggregates reported in Table 4 in \citet{CGT2016}, which are reproduced in Panel B of Table~\ref{CGT-tab4}.
As before, $g(\cdot)$ is obtained by taking the difference between the counterfactual (``Reforms and globalization'') quantities and the benchmark (``Baseline'') quantities, where the latter quantities are normalized to be one.

It can be seen from Table~\ref{CGT-tab4} that the four middle columns are the input for optimization (i.e. elements of $\bm{g}$): using the notation above, we have values for 
$g ( \{c_f\} )$,
$g ( \{\tau_a\} )$,
$g ( \{\tau_c\} )$, 
and
$g (\{c_f\} \cup  \{\tau_a\} )$. They are called ``Labor'', ``Tariff'', ``Iceberg'', and ``Reforms'' in the table.
However, the last two elements of $\bm{g}$ are missing: namely,
$g (\{c_f\} \cup  \{\tau_c\} )$
and
$g ( \{\tau_a\} \cup  \{\tau_c\} )$.
It would be ideal to complete the missing elements in order to construct Shapley value decomposition. However, one interesting research question is what one can do if it is expensive or impossible to carry out further experiments to obtain the missing elements for $\bm{g}$.

In view of the fact that each of the parameters is reduced to favor globalization, we consider the following set of restrictions on the missing elements:
for $g (\{c_f\} \cup  \{\tau_c\} )$, 
\begin{align}\label{ls-1}
\begin{split}
 g ( \{c_f\} ) \cdot \mathrm{sgn} [g ( \{\tau_c\} )] &\leq g (\{c_f\} \cup  \{\tau_c\} ), \\
g ( \{\tau_c\} ) \cdot \mathrm{sgn} [g ( \{c_f\} )] &\leq g (\{c_f\} \cup  \{\tau_c\} ), \\
g (\{c_f\} \cup  \{\tau_c\} ) \cdot \mathrm{sgn} [g ( \{\tau_a\} )]  &\leq g (\{c_f\} \cup  \{\tau_a\} \cup \{\tau_c\}), 
\end{split}
\end{align}
and
for $g (\{d_a\} \cup  \{\tau_c\} )$, 
\begin{align}\label{ls-2}
\begin{split}
g ( \{d_a\} ) \cdot \mathrm{sgn} [g ( \{\tau_c\} )] &\leq g (\{d_a\} \cup  \{\tau_c\} ), \\
g ( \{\tau_c\} ) \cdot \mathrm{sgn} [g ( \{d_a\} )] &\leq g (\{d_a\} \cup  \{\tau_c\} ), \\
g (\{d_a\} \cup  \{\tau_c\} ) \cdot \mathrm{sgn} [g ( \{\tau_a\} )]  &\leq g (\{c_f\} \cup  \{\tau_a\} \cup \{\tau_c\}).
\end{split}
\end{align}
In addition, suppose that we have known upper and lower bounds on 
$g (\{c_f\} \cup  \{\tau_c\} )$ 
and 
$g (\{d_a\} \cup  \{\tau_c\} )$:
\begin{align}\label{ls-3}
    g_{\min} \leq g (\{c_f\} \cup  \{\tau_c\} ) \leq g_{\max} \; \text{ and } \; g_{\min} \leq g (\{d_a\} \cup  \{\tau_c\} ) \leq g_{\max},
\end{align}
where $g_{\min}$ and $g_{\max}$ are predetermined constants such that $g_{\min} = 0.5$ and $g_{\max} = 1.5$.
Combining all the constraints above in \eqref{ls-1}-\eqref{ls-3} forms the linear constraints $A_{\mathrm{const}} \bm{g} \leq b_{\mathrm{const}}$ in Section~\ref{sec:SB-SMNS}.

\begin{table}[htb]
\caption{Counterfactual experiments from \citet{CGT2016}}\label{CGT-tab4}
\begin{tabular}{lcccccc}
\hline\hline 
& Baseline & Labor & Tariff & Iceberg & Reforms & \begin{tabular}{l} 
Reforms and \\
globalization
\end{tabular} \\
\hline 
\multicolumn{7}{l}{Panel A: Parameters} \\
$c_f$ (firing cost) & 0.60 & 0.30 & 0.60 & 0.60 & 0.30 & 0.30 \\
$\tau_a$ (ad valorem tariff rate) & 1.21 & 1.21 & 1.11 & 1.21 & 1.11 & 1.11 \\
$\tau_c$ (iceberg trade cost) & 2.50 & 2.50 & 2.50 & 2.19 & 2.50 & 2.19 \\
& & & & & & \\
\hline \multicolumn{7}{l}{Panel B: Aggregates} \\
Revenue share of exports & 1.00 & 1.02 & 1.36 & 2.01 & 1.39 & 2.50 \\
Exit rate & 1.00 & 0.96 & 1.02 & 1.13 & 0.96 & 1.03 \\
 Job turnover & 1.00 & 0.90 & 1.01 & 1.03 & 0.92 & 0.94 \\
 Mass of firms & 1.00 & 0.96 & 0.95 & 0.74 & 0.88 & 0.66 \\
Share of labor, $Q$ sector & 1.00 & 1.06 & 1.01 & 0.88 & 1.07 & 0.98 \\
Vacancy filling rate $(\phi)$ & 1.00 & 0.93 & 1.04 & 1.11 & 0.99 & 1.09 \\
Unemp. rate, $Q$ sector & 1.00 & 0.73 & 1.11 & 1.38 & 0.88 & 1.19 \\
Std. wages (firms) & 1.00 & 1.09 & 1.01 & 1.03 & 1.12 & 1.18 \\
 Std. wages (workers) & 1.00 & 1.10 & 1.02 & 1.04 & 1.11 & 1.14 \\
Std. $J$ (firms) & 1.00 & 1.05 & 1.03 & 1.06 & 1.09 & 1.18 \\
 Std. $J$ (workers) & 1.00 & 1.04 & 1.04 & 1.06 & 1.07 & 1.21 \\
 Exchange rate & 1.00 & 1.04 & 0.99 & 0.89 & 1.02 & 0.84 \\
 Real income & 1.00 & 0.95 & 1.00 & 1.14 & 0.96 & 1.12 \\
\hline
\end{tabular}
\parbox{6in}{
Notes. This table is reproduced from Table 4 in \citet{CGT2016}. Aggregate statistics in Panel B are normalized by their baseline levels.}
\end{table}

\begin{table}[htbp]
\caption{Shapley Value Decomposition}\label{tab:shapley_Cosar}
\begin{center}
\begin{tabular}{lcccccc}
\hline\hline
Aggregates & & $\phi_{ \{c_f\} }$  &  $\phi_{ \{\tau_a\} }$ & $\phi_{ \{\tau_c\} }$\\
\hline
\multirow{3}{*}{Revenue share of exports} &SLB &   0.012 & 0.182 & 0.98 \\
& SUB & 0.257 & 0.427 & 1.143 \\
& SMNS & 0.175 & 0.345 & 0.98 \\
\hline
\multirow{3}{*}{Exit rate} &SLB & -0.542 & 0.007  & 0.085\\
& SUB & -0.073 & 0.258 & 0.325 \\
& SMNS & -0.075 &  0.01 &   0.095 \\
\hline
\multirow{3}{*}{Mass of firms} &SLB & -0.538 & -0.543 & -0.032\\
& SUB & 0.042 & 0.037 & 0.355 \\
& SMNS & -0.152 & -0.157 & -0.032 \\
\hline
\multirow{3}{*}{Vacancy filling rate ($\phi$)} &SLB & -0.532 & 0.023 & 0.082\\
& SUB & -0.042 & 0.308 & 0.34 \\
& SMNS & -0.045 & 0.03 &  0.105 \\
\hline
\multirow{3}{*}{Unemp. rate, Q sector} &SLB & -0.673 & 0.062  & 0.275\\
& SUB & -0.223 & 0.402 & 0.538 \\
& SMNS & -0.224 & 0.063 & 0.351 \\
\hline
\multirow{3}{*}{Std. wage (firms)} &SLB & 0.058 & 0.008 & 0.033\\
& SUB & 0.123 & 0.063 & 0.073 \\
& SMNS & 0.06 & 0.06 & 0.06 \\
\hline
\multirow{3}{*}{Std. wage (workers)} &SLB & 0.058 & 0.008 & 0.027\\
& SUB & 0.098 & 0.038 & 0.05 \\
& SMNS & 0.058 & 0.038 & 0.043 \\
\hline
\multirow{3}{*}{Std. $J$ (firms)} &SLB & 0.027 & 0.017 & 0.057\\
& SUB & 0.087 & 0.077 & 0.097 \\
& SMNS & 0.06 & 0.06 & 0.06 \\
\hline
\multirow{3}{*}{Std. $J$ (workers)} &SLB & 0.018 & 0.018 & 0.073\\
& SUB & 0.093 & 0.093 & 0.123 \\
& SMNS & 0.068 & 0.068 & 0.073 \\
\hline
\multirow{3}{*}{Real income} &SLB & -0.515 & -0.482 & 0.1 \\
& SUB & 0.243 & 0.285 & 0.608 \\
& SMNS & -0.015 & 0.035 & 0.1 \\
\hline
\end{tabular}
\end{center}
\parbox{6in}{
Notes. SLB, SUB, and SMNS denotes Shapley Lower Bounds, Shapley Upper Bounds, and Shapley Minimum Norm Solution, respectively. Linear constraints of the three aggregate statistics: job turnover, share of labor, and exchange rate are incompatible to each other, and thus they are excluded.}
\end{table}

\begin{figure}[htb]
\caption{Shapley Minimum Norm Solution with Shapley Bounds}\label{fig:SMNS}
\begin{center}
\includegraphics[scale=0.7]{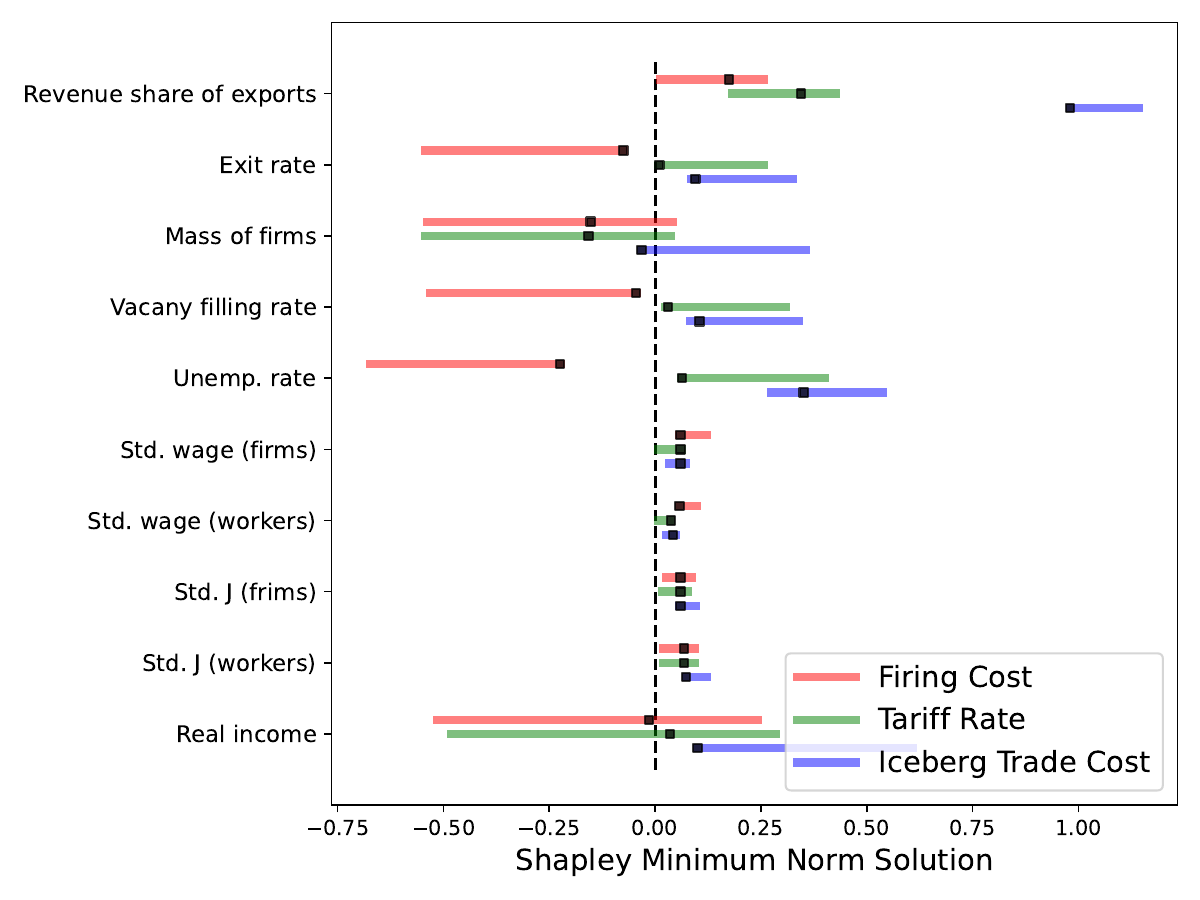}    
\end{center}
\parbox{6in}{
Notes. Each interval corresponds to Shapley bounds, while the black square represents the Shapley minimum norm solution.}
\end{figure}

Table~\ref{tab:shapley_Cosar} and Figure~\ref{fig:SMNS} present the empirical results for the Shapley bounds and the Shapley minimum norm solutions using the methodology developed in Section \ref{sec:SB-SMNS}. As can be seen from Table~\ref{CGT-tab4}, the revenue share of exports increases by 1.5 from the benchmark quantity to the counterfactual quantity. 
The largest portion of this increase is explained by the decrease in the iceberg trade cost (with a lower bound of 0.98).
The reduction in the tariff rate appears to explain slightly more than the reduction in firing costs, although their bounds overlap, making it difficult to definitively rank these two factors. 
Whether considering the Shapley bounds or the Shapley minimum norm solutions, firing cost reductions are linked to decreases in the exit rate, vacancy filling rate, and unemployment rate. Conversely, reductions in the tariff rate and iceberg trade cost are associated with increases in these rates. The decomposition result for the mass of firms is ambiguous: for all three factors, the minimum norm solutions are negative, but the bounds include zero. All the inequality measures---Std. wages (firms), Std. wages (workers),  Std. $J$ (firms), and Std. $J$ (workers)---increase with the reduction in each factor. However, there is no definite ranking among the three factors in terms of their contribution to increases in inequality. 
Finally, real income increases by 0.12 from the benchmark quantity to the counterfactual quantity, primarily explained by the reduction in the iceberg trade cost (with a lower bound of 0.1). The Shapley bounds are rather wide for both the firing cost and the tariff rate.  

Overall, the empirical results suggest that the reduction in iceberg trade costs is the most significant factor for real income growth. However, it is less clear which factor is the primary driver of increases in inequality. Returning to our research question, this example demonstrates that it is feasible to perform a coherent Shapley value decomposition even when it is impossible to conduct further experiments to obtain missing elements for $\bm{g}$. Thus, our methodology holds potential for real-world applications that involve costly counterfactual simulations.


\section{Conclusions}\label{sec:conclusions}


Although Shapley values are widely used across various disciplines, their application to interpreting counterfactual simulations in structural economic models has not yet been explored. We have demonstrated that the group Shapley value provides a natural framework for quantifying the importance of parameter changes in complex models. By preserving the axiomatic properties of the Shapley value, our method offers a well-grounded decomposition approach for explaining outcomes in counterfactual simulations. Practically, it enables researchers to generate importance tables that are as intuitive and accessible as regression tables. In this sense, our use of the Shapley value could pave the way for ``interpretable structural economics,'' much like \citet{LundbergLee}'s introduction of Shapley values to the machine learning community has advanced explainable artificial intelligence.

However, our method has some limitations. First, we focus on comparing two sets of parameters. Extending our approach to cases involving more than two sets of parameter values (e.g., the United States, China, and South Korea, using the example in the introduction) would be valuable. Second, our numerical examples are limited to small-scale optimization problems, where computational complexity is less of a concern. For larger-scale problems, computational challenges become more significant, and developing strategies to address them—such as devising effective sampling techniques and accounting for sampling errors—will be crucial. These are exciting directions for future research.


\appendix

\section{Appendix}\label{sec:appendix}

\begin{proof}[Proof of Proposition~\ref{opt-result}]
We start with the Lagrangian problem:
\begin{align*}
\sum_{\Psi \subsetneqq \Pi}  
\left\{ 
g\left( \bigcup_{A \in \Psi}  A \right)  -  \sum_{A \in \Psi} \phi_A 
\right\}^2 k ( \Pi, \Psi ) + 2 \lambda \left[ \sum_{A \in \Pi} \phi_A - g \left(  P \right) \right].
\end{align*}
The resulting optimal solutions $\{ \phi_M^* : M \in \Pi  \}$ and $\lambda^*$ satisfy:
\begin{align*}
&\lambda^* = \sum_{\Psi \subsetneqq \Pi, M \in \Psi}  
\left\{ 
g\left( \bigcup_{A \in \Psi}  A \right)  -  \sum_{A \in \Psi} \phi_A^* \right\} k ( \Pi, \Psi ) \ \ \text{for each $ \phi_M^*$}, \\
&\sum_{A \in \Pi} \phi_A^*  =  g \left(  P \right). 
\end{align*}
For each integer $1 \leq j \leq |\Pi|-1$, let 
\begin{align*}
k ( |\Pi|, j ) := 
\begin{pmatrix}
|\Pi| - 2 \\
j - 1
\end{pmatrix}^{-1}.
\end{align*}
Define
\begin{align}
\mu_j (M) &:= k ( |\Pi|, j ) \sum_{| \Psi | = j, M \in \Psi}   g\left( \bigcup_{A \in \Psi}  A \right), \label{mu_j_def} \\
x^*_j (M) &:= k ( |\Pi|, j ) \sum_{| \Psi | = j, M \in \Psi}   \sum_{A \in \Psi} \phi_A^* .   \label{x_j_def}
\end{align}
For each $M \in \Pi$, write
\begin{align}
\lambda^* 
&= \sum_{j=1}^{|\Pi|-1} \sum_{| \Psi | = j, M \in \Psi} \left\{  g\left( \bigcup_{A \in \Psi}  A \right)  -  \sum_{A \in \Psi} \phi_A^* \right\} k ( |\Pi|, j ) \nonumber \\
&= \sum_{j=1}^{|\Pi|-1} \{ \mu_j (M) - x^*_j (M) \}. \label{lambda-eq}
\end{align}
Note that for each $M \in \Pi$, 
\begin{align*}
\sum_{| \Psi | = j, M \in \Psi}   \sum_{A \in \Psi} \phi_A^*
&= 
\phi_M^*
\begin{pmatrix}
|\Pi| - 1 \\
j - 1
\end{pmatrix}
+
\sum_{A \in \Psi \setminus \{ M \}} \phi_A^*
\begin{pmatrix}
|\Pi| - 2 \\
j - 2
\end{pmatrix}.
\end{align*}
Using this, further write 
\begin{align*}
\sum_{j=1}^{|\Pi|-1} x^*_j (M)
&= k ( |\Pi|, 1 ) \phi_M^*
+ \sum_{j=2}^{|\Pi|-1} k ( |\Pi|, j ) \sum_{| \Psi | = j, M \in \Psi}   \sum_{M \in \Psi} \phi_M^* \\
&= \sum_{j=1}^{|\Pi|-1} k ( |\Pi|, j ) 
\begin{pmatrix}
|\Pi| - 1 \\
j - 1
\end{pmatrix}
\phi_M^* 
+
\sum_{j=2}^{|\Pi|-1} k ( |\Pi|, j )  \begin{pmatrix}
|\Pi| - 2 \\
j - 2
\end{pmatrix}
\sum_{A \in \Psi \setminus \{ M \}} \phi_A^* \\
&= \sum_{j=1}^{|\Pi|-1} k ( |\Pi|, j ) 
\begin{pmatrix}
|\Pi| - 1 \\
j - 1
\end{pmatrix}
\phi_M^* 
- 
\sum_{j=2}^{|\Pi|-1} k ( |\Pi|, j )  \begin{pmatrix}
|\Pi| - 2 \\
j - 2
\end{pmatrix}
\phi_M^* \\
&\;\;\; +
\sum_{j=2}^{|\Pi|-1} k ( |\Pi|, j )  \begin{pmatrix}
|\Pi| - 2 \\
j - 2
\end{pmatrix}
\sum_{A \in \Psi} \phi_A^* \\
&= \sum_{j=1}^{|\Pi|-1} k ( |\Pi|, j ) 
\begin{pmatrix}
|\Pi| - 2 \\
j - 1
\end{pmatrix}
\phi_M^* 
+ 
\sum_{j=2}^{|\Pi|-1} k ( |\Pi|, j )  \begin{pmatrix}
|\Pi| - 2 \\
j - 2
\end{pmatrix}
g \left(  P \right) \\
&= 
( |\Pi| - 1 )
\phi_M^* 
+ 
\sum_{j=2}^{|\Pi|-1} k ( |\Pi|, j )  \begin{pmatrix}
|\Pi| - 2 \\
j - 2
\end{pmatrix}
g \left(  P \right),
\end{align*}
where the last equality follows from the fact that 
\begin{align*}
k ( |\Pi|, j ) 
\begin{pmatrix}
|\Pi| - 2 \\
j - 1
\end{pmatrix}
&= 1.
\end{align*}
Therefore, it follows from \eqref{lambda-eq} that 
\begin{align}\label{lambda-eq-new}
( |\Pi| - 1 )
\phi_M^* 
+ 
\sum_{j=2}^{|\Pi|-1} k ( |\Pi|, j )  \begin{pmatrix}
|\Pi| - 2 \\
j - 2
\end{pmatrix}
g \left(  P \right)
&=
\sum_{j=1}^{|\Pi|-1} \mu_j (M)
- \lambda^*.
\end{align}
Now summing  \eqref{lambda-eq-new} over $M \in \Pi$ yields
\begin{align*}
 \lambda^* 
&= \frac{1}{|\Pi|} 
\left[ 
\sum_{M \in \Pi} \sum_{j=1}^{|\Pi|-1} \mu_j (M) 
- ( |\Pi| - 1 ) g(P) 
\right]
- \sum_{j=2}^{|\Pi|-1} k ( |\Pi|, j )  \begin{pmatrix}
|\Pi| - 2 \\
j - 2
\end{pmatrix}
g \left(  P \right).
\end{align*}
Thus, it follows from \eqref{lambda-eq-new} that 
\begin{align}\label{gsv-der}
\phi_M^* 
&=
\frac{1}{|\Pi| - 1}
\left[
\sum_{j=1}^{|\Pi|-1} \mu_j (M)
- 
\frac{1}{|\Pi|}
\left\{
\sum_{M \in \Pi} \sum_{j=1}^{|\Pi|-1} \mu_j (M) 
- ( |\Pi| - 1 ) g(P) 
\right\}
\right].
\end{align}
To prove the result, it remains to show that for any set function $g \in \mathbb{R}^{\Pi}$, \eqref{gsv-der} is identical to the following expression:
%
%
\begin{align}\label{gsv-alt-simple}
\begin{split}
\frac{1}{|\Pi|}
\sum_{\Psi \subset \Pi} 
\begin{pmatrix}
|\Pi| - 1 \\
|\Psi| - 1
\end{pmatrix}^{-1}
\left[ 
g\left( \bigcup_{A \in \Psi} A \right)
-
g\left( \bigcup_{A \in \Psi \setminus \{ M \}} A \right)
\right].
\end{split}
\end{align} 
To this end, we now consider a function $g_{\Upsilon}(\cdot)$ such that
\begin{align*}
g_{\Upsilon}\left( \bigcup_{A \in \Upsilon} A \right) = 1
\ \ \text{ and } \ \
g_{\Upsilon}\left( \bigcup_{A \in \Psi} A \right) = 0 \text{ for any $\Psi \neq \Upsilon$}.
\end{align*}
Since $\{g_{\Upsilon}: \Upsilon \subset \Pi \}$ is an orthonormal basis of $\mathbb{R}^{\Pi}$, it is enough to show the identity for every $g_{\Upsilon}$.
It follows from \eqref{gsv-alt-simple} that 
\begin{align}\label{phi-optimal-1}
\begin{split}
\phi_M^* =
\left\{ 
\begin{array}{cc}
\frac{1}{|\Pi|}
\begin{pmatrix}
|\Pi| - 1 \\
|\Upsilon| - 1
\end{pmatrix}^{-1}
& \text{ if $M \in \Upsilon$}, \\
- \frac{1}{|\Pi|}
\begin{pmatrix}
|\Pi| - 1 \\
|\Upsilon|
\end{pmatrix}^{-1}
& \text{ if $M \notin \Upsilon$}.   
\end{array}
\right.
\end{split}
\end{align} 
By \eqref{mu_j_def}, we have that
\begin{align*}
\sum_{j=1}^{|\Pi|-1} \mu_j (M) 
&= \sum_{j=1}^{|\Pi|-1} k ( |\Pi|, j ) \sum_{| \Psi | = j, M \in \Psi}   g\left( \bigcup_{A \in \Psi}  A \right) \\
&=
\left\{ 
\begin{array}{cc}
k ( |\Pi|, |\Upsilon| )   & \text{ if $M \in \Upsilon$}, \\
0 & \text{ if $M \notin \Upsilon$}.  
 \end{array}
\right.
\end{align*}
Furthermore, note that $g_{\Upsilon}(P) = 0$ and 
\begin{align*}
\sum_{M \in \Pi} \sum_{j=1}^{|\Pi|-1} \mu_j (M) 
&= |\Upsilon| k ( |\Pi|, |\Upsilon| ). 
\end{align*}
Now use \eqref{gsv-der} to write 
\begin{align*}
\phi_M^* 
&=
\frac{1}{|\Pi| - 1}
\left[
\sum_{j=1}^{|\Pi|-1} \mu_j (M)
- 
\frac{1}{|\Pi|}
\left\{
\sum_{M \in \Pi} \sum_{j=1}^{|\Pi|-1} \mu_j (M) 
- ( |\Pi| - 1 ) g(P) 
\right\}
\right] \\
&=
k ( |\Pi|, |\Upsilon| )
\left\{ 
\begin{array}{cc}
\frac{1}{|\Pi| - 1} 
\left( 
1 - \frac{1}{|\Pi|} |\Upsilon| \right)
  & \text{ if $M \in \Upsilon$}, \\
- \frac{1}{|\Pi|(|\Pi| - 1)} 
 |\Upsilon| 
& \text{ if $M \notin \Upsilon$},
\end{array}
\right. 
\end{align*}
which is identical to those in \eqref{phi-optimal-1}. Therefore, we have proved the proposition.
\end{proof}

\bibliographystyle{chicago}
\bibliography{Shapley}

\end{document}